\crefname{equation}{}{}
\crefname{enumi}{}{}
\Crefname{algocf}{Algorithm}{Algorithms}
\newtheorem{theorem}{Theorem}[section]
\newtheorem{lemma}[theorem]{Lemma}
\newtheorem{proposition}[theorem]{Proposition}
\newtheorem{definition}[theorem]{Definition}
\newtheorem{remark}[theorem]{Remark}
\numberwithin{equation}{section}
\def\diffd{\mathrm{d}} 
\DeclareDocumentCommand\differential{ o g d() }{ 
	\IfNoValueTF{#2}{
		\IfNoValueTF{#3}
			{\diffd\IfNoValueTF{#1}{}{^{#1}}}
			{\mathinner{\diffd\IfNoValueTF{#1}{}{^{#1}}\argopen(#3\argclose)}}
		}
		{\mathinner{\diffd\IfNoValueTF{#1}{}{^{#1}}#2} \IfNoValueTF{#3}{}{(#3)}}
	}
\DeclareDocumentCommand\dd{}{\differential} 
\DeclareDocumentCommand\partialderivative{ s o m g g d() }
{ 
	\IfBooleanTF{#1}
	{\let\fractype\flatfrac}
	{\let\fractype\frac}
	\IfNoValueTF{#4}
	{
		\IfNoValueTF{#6}
		{\fractype{\partial \IfNoValueTF{#2}{}{^{#2}}}{\partial #3\IfNoValueTF{#2}{}{^{#2}}}}
		{\fractype{\partial \IfNoValueTF{#2}{}{^{#2}}}{\partial #3\IfNoValueTF{#2}{}{^{#2}}} \argopen(#6\argclose)}
	}
	{
		\IfNoValueTF{#5}
		{\fractype{\partial \IfNoValueTF{#2}{}{^{#2}} #3}{\partial #4\IfNoValueTF{#2}{}{^{#2}}}}
		{\fractype{\partial^2 #3}{\partial #4 \partial #5}}
	}
}
\DeclareDocumentCommand\pdv{}{\partialderivative} 
\newcommand{\p}{\mathbb{P}}
\newcommand{\reals}{\mathbb{R}}
\newcommand{\intr}{\int_{\reals}}
\newcommand{\ito}{It{\^o}}
\newcommand{\e}{\mathbb{E}}
\newcommand{\ee}{\mathrm{e}}
\newcommand{\iid}{i.i.d.\@\xspace}
\newcommand{\f}{\mathcal{F}}
\newcommand{\ds}{\dd{s}}
\newcommand{\dt}{\dd{t}}
\newcommand{\dx}{\dd{x}}
\newcommand{\dy}{\dd{y}}
\newcommand{\Lsc}{\mathcal{L}^{\mathrm{sr}}}
\newcommand{\eqdist}{\overset{\mathcal{D}}{=}}
\author{Young Lee\qquad\quad Patrick J. Laub\qquad\quad Thomas Taimre\\
\emph{\small{\hspace{-5mm}
H\MakeLowercase{arvard}}\hspace{26mm}
M\MakeLowercase{elbourne}\hspace{28mm} Q\MakeLowercase{ueensland}}\\
\vspace{8mm}
Hongbiao Zhao\qquad\qquad Jiancang Zhuang\\
\emph{\small{S\MakeLowercase{hanghai}}\quad\qquad\qquad\qquad\qquad\qquad T\MakeLowercase{okyo}}\qquad\qquad}
\date{\today\text{. To appear in Journal of Applied Probability, \textbf{59}(1), 2022}}
\begin{document}

\title{Exact simulation of extrinsic stress-release processes}

\maketitle




\begin{abstract}
	We present a new and straightforward algorithm that simulates exact sample paths for a generalized stress-release process. The computation of the exact law of the joint interarrival times is detailed and used to derive this algorithm. Furthermore, the martingale generator of the process is derived and induces theoretical moments which generalize some results of \citet{borovkov_vere-jones_2000} and are used to demonstrate the validity of our simulation algorithm.
\end{abstract}

\vspace{1mm}



\section{Introduction}

Stress-release processes are a class of point processes, the first of which were self-correcting processes~\citep{isham1979self}. Intuitively, a process is self-correcting if the occurrence of past points inhibits the occurrence of future points. The stress-release processes are a generalization of self-correcting processes and was introduced in a series of papers by Vere-Jones and others \citep{ZhengVereJones:1991,ZhengVereJones:1994} as well as extensions to coupled stress-release processes \citep{Liu:1998,Shi:1998,LuHarteBebbingon:1999} and further developments \citep{BebbingtonHarte:2001,BebbingtonHarte:2003}.

In this study, we work with a \emph{generalization} of the stress-release process which includes an exogenous point process term whose values upon arrivals are modeled by a \emph{positive} real-valued random variable. We call our model the \emph{extrinsic stress-release processes}.

We present a new formula for the law of the joint interarrival times for extrinsic stress-release processes. As a natural consequence, an exact simulation algorithm is then proposed which gives an alternative method to generating sample paths relative to standard methods~\citep{LewisShedler1979}. Our exact simulation algorithm naturally extends the results of~\citet{wang:1991} as a special case. The extension of our model is motivated by the influence of exogenous geophysical data on earthquake occurrence~\citep[see e.g.,][]{ZhuangMa:1998,OgataReview:2017}. Point process models of this kind are typically used to describe the evolution of stochastic phenomena in earthquake modeling and it is important to be able to simulate them for reliable predictions of damage due to a range of earthquake scenarios. Thinning algorithms~\citep{Ogata1981} have been successfully employed to simulate a wide range of point processes, such as inhomogeneous point processes~\citep[pp.\ 270--271 of][]{DaleyBook,zammit-pnas}, or Hawkes processes~\citep{veen-schoenberg}. Indeed the same idea can be applied to the generalized stress-release process proposed here. In this paper, simulation of the extrinsic stress-release process by our exact algorithm will be compared with the standard thinning algorithm.

Finally, we present the infinitesimal generator for extrinsic stress-release processes. This generator is intimately linked to the martingale problem which is used to characterize the weak solutions of partial integro-differential equations~\citep{LiptserShiryaev}, and it allows us to derive the theoretical reciprocal moments of the intensity function. In \cref{scn:numerical}, these reciprocal moments are used to demonstrate the correctness of our simulation algorithms. Basic notions and results in stochastic calculus are taken as prerequisites throughout the present text~\cite[confer e.g.,][]{Protter2005}.

\section{Extrinsic Stress-Release Model}%
\label{sec:gsrm}

At the base of everything is some filtered probability space $(\Omega,\f,\mathbb{F},\p)$. We assume that $\f_0$ is trivial, and the filtration $\mathbb{F} \coloneqq (\f_t)_{t\geq 0}$ fulfills the usual conditions and is generated by a point process $N(\cdot)$ on $\mathbb{R}_{+}$ where $0 < T_1 < T_2 < \cdots$ denote the occurrence times of the events. Let $N_t = \sharp\{T_i : 0 < T_i \le t\}$ be the number of the occurrence points in the time interval $(0,t]$ with $N_0=0$. Furthermore, we let $N'_t = \sharp\{T'_j : 0 < T'_j \le t\}$ be a Poisson process on $\mathbb{R}_{+}$ with arrival times $0 <T^{\prime}_1 < T^{\prime}_2< \cdots$  endowed with intensity $\rho$ and is independent of $N_t$, with $N^{\prime}_0=0$.

\begin{definition}%
	\label{def:extrensic-stress-release}
	The proposed extrinsic stress-release process $N(\cdot)$ is a point process on $\reals_+$ with conditional intensity function given by
	\begin{align}
		\label{eq:conditional intensity function original}
		\lambda_t \coloneqq \lambda(t\,|\,\f_t) = \lambda_0\exp(\beta t - S_t - S'_t),\,\quad t\geq 0,
	\end{align}
	where $S_t=\sum_{i\,:\,T_i< t} X_i$ and $S'_t=\sum_{j\,:\,T'_j < t} Y_j$ are the compound point process and compound Poisson process, respectively. The $X_i$'s and $Y_j$'s are \iid positive random variables, with distribution functions $F_X$ and $F_Y$ respectively, and the stress accumulation rate is the constant $\beta>0$. $\hfill \diamond$
\end{definition}

\begin{figure}[!t]
	\centering
	\input{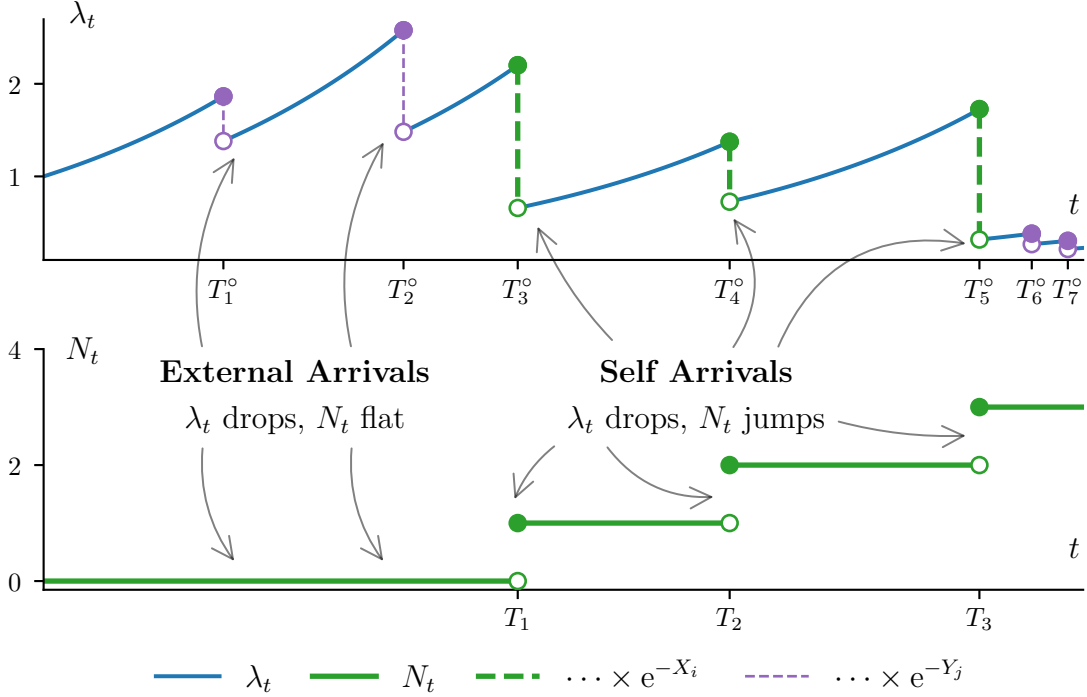}
	\caption{An example realization of an extrinsic stress-release process, with $\lambda_0 = 1$, $\beta = 1.5$, $\rho = 2$, and $X_i \sim \mathsf{Exp}(1)$ and $Y_j \sim \mathsf{Exp}(2)$. Note that $N$ is c\`adl\`ag while $\lambda$ is c\`agl\`ad.}%
	\label{fig:example}
\end{figure}

Between jumps, $\lambda_t$ increases exponentially with a positive rate of $\beta>0$. Jumps are downward multiplicative factors of size $\ee^{-X_i} < 1$ for a self-arrival at time $T_i$, or of size $\ee^{-Y_j} < 1$ for an external arrival at time $T'_j$. When a self-arrival occurs at time $T_i$, $N_t$ increases by one, hence $(N_t, \lambda_t)$ is a Markov process. Instead of separating the self-arrivals of $N_t$ and the external arrivals of $N'_t$, it is sometimes convenient to consider all the arrivals indiscriminately. As such, we label the $k$-th arrival as $T^\circ_k$, and it can correspond either to some self-arrival $T_i$ or some external arrival $T'_j$. See \cref{fig:example} for an example realization of the extrinsic stress-release process, with the effects of the self-arrivals and external arrivals on the conditional intensity function $\lambda_t$ highlighted.

\begin{remark}
	Our proposed extrinsic stress-release process differs slightly from the coupled stress-release model of~\citet{Liu:1998}. Their equivalent of the $X_i$ and $Y_j$ variables in \cref{def:extrensic-stress-release} are not unobserved random variables, they are deterministic functions of the observed earthquake magnitudes. In our proposed model, we allow for these quantities to take any \iid random variables which are positive and unobserved, so our formulation generalizes theirs. Differently from ours, they allow for model parameters $c$ and $c^{\prime}$ in the exponent of equation \cref{eq:conditional intensity function original} of the form
	\begin{align}
		-c S_t - c^{\prime} S^{\prime}_t
	\end{align}
	where $c$ and $c^{\prime}$ can \emph{either} take negative or positive values, thereby allowing for both damping and excitation. We only consider the inhibitory regime, i.e.\ $c=c^{\prime}=1$, hence in this case, their formulation for general $c$ and $c^{\prime}$ subsumes ours. In either formulation, little or no work has appeared on exact simulation strategies for the coupled stress-release model. We further add some new aspects to the computation of explicit generators which facilitates moment computations. For other theoretical and stationary moment calculations without the exogenous term $S'$, see~\citet{vere1984moments,ogata1984inference,verejones-variance:1988,borovkov_vere-jones_2000}.
	$\hfill \diamond$
\end{remark}

\section{The law of joint interarrival times}

In this section, we present the explicit law of the joint interarrival times for extrinsic stress-release processes. This terminology, `joint interarrival time', refers to the time between each of $T^\circ_k$ arrivals (defined in \cref{sec:gsrm}).
\ito's formula \citep{JacodShiryaev} splits $\lambda_t$ into continuous and jump components,
\[
	\lambda_t = \lambda_0 + \int_0^t\beta\lambda_s \ds + \sum_{i : T_i \le t} \lambda_{T_i} (\ee^{-X_i}-1) + \sum_{j : T'_j \le t} \lambda_{T'_j} (\ee^{-Y_j}-1) .
\]
Between consecutive jumps the $\lambda$ process evolves according its continuous part. In particular, conditioned on $T^\circ_k$ and $\lambda_{T^{\circ+}_k}$, we have
\begin{equation} \label{eq:continuous_part}
	\lambda_t = \lambda_{T^{\circ+}_k} \exp\bigl(\beta(t-T^\circ_k)\bigr) \quad \text{for } t \in (T^\circ_k, T^\circ_{k+1}).
\end{equation}
The intensity of the $T^\circ_k$ arrivals is the combination of the $T_i$ and $T'_j$ arrival intensities $\lambda_t + \rho$.
Let the $k$-th joint interarrival time be denoted by $\tau_k \coloneqq T^\circ_k-T^\circ_{k-1}$ with cumulative density function $F_{\tau_{k}}$. With \cref{eq:continuous_part}, we can simplify the point process relation
\begin{align} \label{eq:joint-iit-law}
	F_{\tau_{k+1}}(t)
	= 1 - \exp\Bigl( {-} \int_{0^+}^t (\lambda_{T^\circ_k + s} + \rho) \ds \Bigr)
	= 1 - \exp\Bigl( {-} \frac{\lambda_{T^{\circ+}_k}}{\beta} (\ee^{\beta t}-1) \Bigr) \ee^{-\rho t} ,
\end{align}
which is the law of the joint interarrival times.

\section{Simulation methods}%
\label{scn:simulation}

The law of the joint interarrival times in \cref{eq:joint-iit-law} can be used to derive a simulation method for extrinsic stress-release processes. To simulate we need to: (i) generate $T^\circ_k$ joint interarrival times, and (ii) be able to attribute each arrival as being either a self-arrival from $N_t$ or an external arrival from $N'_t$. By the inverse probability integral transform, we have $\tau_{k+1} \eqdist F_{\tau_{k+1}}^{-1}(U),\,U\sim \mathsf{U}[0,1]$ where~$\eqdist$ denotes equality in distribution. The inverse $F_{\tau_{k+1}}^{-1}$ does have an analytic solution (which is somewhat rare) in terms of the Lambert~$W$ function, so we can generate joint interarrival times by the inverse transform method. However the Lambert~$W$ function is relatively slow in many software packages, and this calculation does not perform the second attribution step. A faster alternative, which solves both problems at once, is to use the \emph{composition method}.

\subsection{Exact simulation of stress-release model.}

The composition method~\citep[Section~VI.2.3]{devroye1986} simulates $\tau_{k+1}$ from two simpler independent random variables $\tau^{(1)}_{k+1}$ and $\tau^{(2)}_{k+1}$ by taking
\begin{align}
	\tau_{k+1} \eqdist \tau^{(1)}_{k+1} \wedge \tau^{(2)}_{k+1}
\end{align}
where the notation $\tau^{(1)}_{k+1} \wedge \tau^{(2)}_{k+1}$ is simply shorthand for $\min\{ \tau^{(1)}_{k+1}, \tau^{(2)}_{k+1} \}$.
One way to satisfy this relation is for $\p(\tau^{(1)}_{k+1}>s) = \exp\bigl({-} \lambda_{T^{\circ+}_k} \beta^{-1} (\ee^{\beta s}-1)\bigr)$ and $\p(\tau^{(2)}_{k+1}>s)=\ee^{-\rho s}$, so
\begin{align}
	\label{eq:laws-self-external}
	\tau^{(1)}_{k+1} & \eqdist \frac{1}{\beta}\log\Bigl(1-\frac{\beta}{\lambda_{T^\circ_k}}\log(U_1)\Bigr),\quad\tau^{(2)}_{k+1} \eqdist -\frac{1}{\rho}\log(U_2),\quad U_1,U_2\sim \textsf{U}[0,1].
\end{align}
This is the key step in the composition algorithm, presented in full in \cref{alg:composition}.

\begin{algorithm}[ht]
	\caption{Generate an extrinsic stress-release process by composition.}%
	\label{alg:composition}
	\small
	\KwIn{start intensity $\lambda_0$, stress rate $\beta$, external arrival rate $\rho$, jump size distributions $F_X$ and $F_Y$, end time $T$}
	\Begin{
	Initialize $T^\circ_0 \gets 0$, $\epsilon \gets 10^{-10}$ or similar, $\lambda_{\epsilon} \gets \lambda_0$, $i \gets 0$, $j \gets 0$, $k \gets 0$ \;
	Simulate $\tau^{(1)}_{k+1}$ and $\tau^{(2)}_{k+1}$ via equation \cref{eq:laws-self-external}
	and let $T^\circ_{k+1} \gets T^\circ_k + \tau^{(1)}_{k+1} \wedge \tau^{(2)}_{k+1}$ \;\label{step:composition-start}
	\If{$T^\circ_{k+1} > T$}{
		\KwRet self-arrivals $\{ T_1, \dots, T_i \}$ and, if desired, external arrivals $\{ T'_1, \dots, T'_j \}$
	}
	Calculate $\lambda_{T^\circ_{k+1}} \gets \lambda_{T^\circ_{k} + \epsilon} \exp\bigl( \beta (T^\circ_{k+1} - T^\circ_k) \bigr)$ by \cref{eq:continuous_part} \;
	\uIf{$\tau^{(1)}_{k+1} < \tau^{(2)}_{k+1}$}{
		Update $i \gets i + 1$ and $T_i \gets t + \tau^{(1)}_{k+1}$ \Comment*[r]{Self-arrival}
		Simulate $X_i \sim F_X$ and update $\lambda_{T^\circ_{k+1} + \epsilon} \gets \lambda_{T^\circ_{k+1}} \ee^{-X_i}$ \;
	}\Else{
		Update $j \gets j + 1$ and $T'_j \gets t + \tau^{(2)}_{k+1}$ \Comment*[r]{External arrival}
		Simulate $Y_j \sim F_Y$ and update $\lambda_{T^\circ_{k+1} + \epsilon} \gets \lambda_{T^\circ_{k+1}} \ee^{-Y_j}$ \;
	}
	Update $k \gets k + 1$ and go to line~\ref{step:composition-start} \;
	}
\end{algorithm}

\subsection{Simulation by thinning}

Extrinsic stress-release processes can also be simulated via the \emph{thinning algorithm}. The basic idea in this method is to generate a point process which has more arrivals than the model dictates, then probabilistically remove the excess points. The result can be computationally inefficient, and we compare the runtime of the thinning and composition simulation methods in \cref{scn:numerical}.

The first step in the thinning algorithm is to generate the $N'_t$ and $S'_t$ processes. The self-arrivals are then generated conditional on these external arrivals. Each self-arrival is generated sequentially, and requires a local upper bound on the intensity process. If we know $S'_t$ for all $t \in \reals_+$ we obviously have
\[ \lambda_t \leq \lambda_0\exp(\beta t - S'_t),\,\quad t\in\reals_+, \]
though as $t$ increases this becomes an extremely loose bound. However, if we also know the process $S_t$ up until time $\tau$, then
\begin{align}
	\label{eq:intensity-bounds}
	\overline{\lambda}_{t \mid \tau} \coloneqq \lambda_0\exp(\beta t - S_{t \wedge \tau} - S'_t),\,\quad t\in\reals_+,
\end{align}
is a much tighter upper bound on the intensity, at least for $t \in (\tau, \tau+\Delta)$ for moderately small $\Delta$'s. \Cref{fig:intensity-bounds} shows some example realizations of~\cref{eq:intensity-bounds}.

With this definition, we can describe the thinning algorithm for the generalized stress-release process in \cref{alg:thinning}. In particular, line~\ref{step:thin-start} of the algorithm uses \cref{eq:intensity-bounds} to find an upper bound of $\lambda_t$ over a small region $t \in (\tau, \tau+\Delta]$; these maximum values are not too tedious to find, as they occur either at the end time $\tau+\Delta$ or at one of the external arrival times $T'_j$ which arrives inside the region.

\begin{figure}[!t]
	\centering
	\input{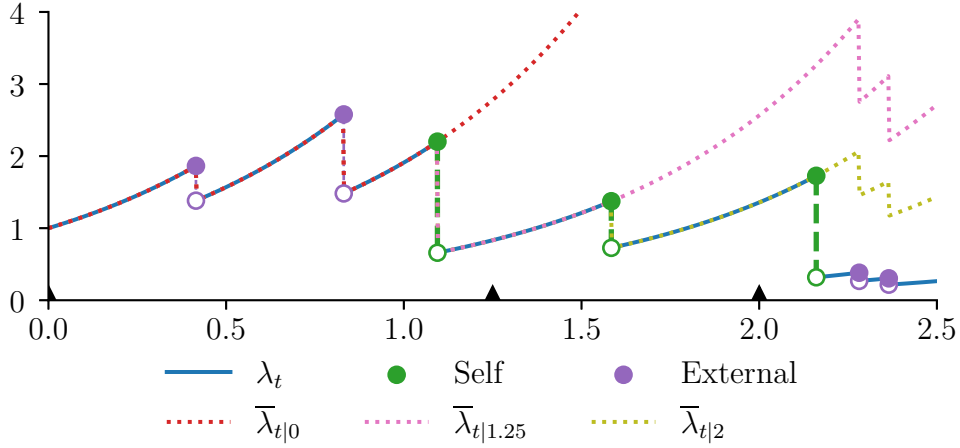} \\
	\vspace{-1em}
	\caption{Example $\overline{\lambda}_{t \mid \tau}$ upper bounds on the intensity function $\lambda_t$. This is the same realization of the generalized stress-release process from \cref{fig:example}.}%
	\label{fig:intensity-bounds}
\end{figure}

\begin{algorithm}[!h]
	\caption{Generate an extrinsic stress-release process by thinning.}%
	\label{alg:thinning}
	\small
	\KwIn{start intensity $\lambda_0$, stress rate $\beta$, external arrival rate $\rho$, jump size distributions $F_X$ and $F_Y$, end time $T$, step size $\Delta$}
	\Begin{
		Simulate the external arrivals $\{T'_1, \dots, T'_{N'_T}\}$ at rate $\rho$ by standard Poisson process methods \;
		Simulate \iid external jump sizes $\{Y_1, \dots, Y_{N'_T}\}$ from $F_Y$ and construct the $(S'_t\,)_{t\in[0,T]}$ process \;
		Initialize $i = 0$, $t = 0$, $\epsilon \gets 10^{-10}$ or similar \;
		Set $M$ to be the maximum value of $\overline{\lambda}_{s \mid t + \epsilon}$, cf.\ \cref{eq:intensity-bounds}, over $s \in (t, t + \Delta]$ \; \label{step:thin-start}
		Generate a proposal self-arrival $T^* = t + E$ where $E \sim \mathsf{Exp}(M)$ \;
		\uIf{$T^* > T$}{
			\KwRet self-arrivals $\{ T_1, \dots, T_i \}$
		} \ElseIf{$T^* > t + \Delta$}{
			Reject the proposal, set $t \gets t + \Delta$, go to line~\ref{step:thin-start} \;
		}
		Sample $U \sim \mathsf{U}[0,1]$ \;
		\uIf{$U \le \overline{\lambda}_{T^* \mid t} \,/\, M$}{
			Accept the proposal: $i \gets i + 1$, $T_i \gets T^*$ \;
			Simulate a jump size $X_i \sim F_X$ and update $\lambda_{T_i + \epsilon} \gets \lambda_{T_i} \ee^{-X_i}$ \;
			Set $t \gets T^*$, go to line~\ref{step:thin-start} \;
		}
		\Else{
			Reject the proposal: $t \gets t + \Delta$, go to line~\ref{step:thin-start} \;
		}
	}
\end{algorithm}

\section{The generator}%
\label{scn:generator}

In this section, we derive the explicit form of the infinitesimal generator for our process. With this, we are able to find reciprocal moments which is then used to confirm the validity of our simulation algorithm.

\subsection{Constructing the infinitesmal generator}

Let us introduce the integro-differential operator $\Lsc$ of our extrinsic stress-release process $(\lambda_t,N_t,t)$ which acts on a function $f(\lambda,n,t)$ within its domain $\Omega(\Lsc)$ as follows
\begin{equation}
	\label{eq:operator-sc-definition}
	\begin{aligned}
		\Lsc f \coloneqq \pdv{f}{t} + \beta \lambda \pdv{f}{\lambda} + \lambda & \intr [f(\lambda \ee^{-x},n+1,t)-f(\lambda,n,t)] \, F_X(\dx) \\
		+\rho                                                                  & \intr [f(\lambda \ee^{-y},n,t)-f(\lambda,n,t)] \, F_Y(\dy).
	\end{aligned}
\end{equation}
From Propositions II.1.16 \& II.1.15 in~\citet{JacodShiryaev}, the conditional intensity function in equation~\cref{eq:conditional intensity function original} can be recast as
\begin{align}
	\lambda_t = \lambda_0\exp\left(\beta t - \int_0^t\intr x \mu(\dx, \ds) - \int_0^t\intr y \mu'(\dy, \ds)\right)
\end{align}
where $\mu$ and $\mu'$ are the jump measures associated with $S$ and $S'$, respectively. Their associated predictable compensators are $\nu(\dx, \dt) = F_X(\dx) \lambda_t \dt$ and $\nu'(\dy, \dt) = F_Y(\dy) \rho \dt$. We now state the following result:
\begin{proposition}%
	\label{prop:generator}
	Let the integro-differential operator for our stress-release process be defined as in equation~\cref{eq:operator-sc-definition}. Then for each $t\in\reals_+$ the following holds
	\begin{align}
		\label{eq:ic3}
		\e[f(\lambda_t,N_t,t)] = f(0,N_0,\lambda_0) + \e\left[\int_0^t \Lsc f(\lambda_s,N_s,s) \ds\right]
	\end{align}
	if the following $f$-integrability conditions hold:
	\begin{align}
		\label{eq:ic1}
		\e\left[\int_0^t \lambda_s \ds \intr [f(\lambda_{s^+},N_s,s)-f(\lambda_s,N_{s^-},s)]^2 \, F_X(\dx) \right]<\infty
	\end{align}
	and
	\begin{align}
		\label{eq:ic2}
		\e\left[\int_0^t \rho \ds \intr [f(\lambda_{s^+},N_s,s)-f(\lambda_s,N_{s^-},s)]^2 \, F_Y(\dy) \right]<\infty .
	\end{align}
	Moreover, $f$ satisfies the following partial integro-differential equation
	\begin{equation}
		\begin{aligned}
			\label{eq:pide-gsrm}
			 & \pdv{f}{t} + \beta \lambda \pdv{f}{\lambda} + \lambda \intr\left[f(\lambda \ee^{- x},n+1,t)-f(\lambda,n,t)\right] F_X(\dx) \\
			 & \qquad\qquad +\rho\intr\left[f(\lambda \ee^{-y},n,t)-f(\lambda,n,t)\right] F_Y(\dy)=0.
		\end{aligned}
	\end{equation}
\end{proposition}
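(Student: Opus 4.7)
The plan is to apply Itô's formula for semimartingales to $f(\lambda_t,N_t,t)$, using the fact that $\lambda$ is a finite-variation process (piecewise exponential between downward multiplicative jumps) and $N$ is a pure jump process. This yields a natural splitting into a continuous drift, a compensator part, and two compensated-jump martingales; the integrability conditions are what guarantee that the latter two terms are true martingales (rather than merely local martingales), so that taking expectations kills them and leaves precisely the $\Lsc f$ integrand.

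First I would decompose $f(\lambda_t,N_t,t)-f(\lambda_0,N_0,0)$ using the Itô change-of-variable formula. Between jumps of $N$ and $N'$, the process $\lambda$ satisfies $\dd\lambda_s=\beta\lambda_s\,\ds$ and $N$ is constant, so the continuous contribution is $\int_0^t\bigl(\pdv{f}{s}+\beta\lambda_s\pdv{f}{\lambda}\bigr)(\lambda_s,N_s,s)\,\ds$. The jumps of the joint process are of two disjoint kinds: at a self-arrival $T_i$, simultaneously $\lambda_{T_i}=\lambda_{T_i^-}\ee^{-X_i}$ and $N_{T_i}=N_{T_i^-}+1$; at an external arrival $T'_j$, $\lambda_{T'_j}=\lambda_{T'_j^-}\ee^{-Y_j}$ while $N$ is unchanged. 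Writing the sum of jumps in terms of the jump measures $\mu$ and $\mu'$ and inserting the predictable compensators $\nu(\dx,\ds)=F_X(\dx)\lambda_s\ds$ and $\nu'(\dy,\ds)=F_Y(\dy)\rho\,\ds$ stated just before the proposition, I would split each jump integral as
\begin{align*}
    \int_0^t\!\!\intr[\,f(\lambda_{s^-}\ee^{-x},N_{s^-}+1,s)-f(\lambda_{s^-},N_{s^-},s)\,]\,\mu(\dx,\ds)
    &= \int_0^t\!\!\intr[\,\cdots\,]\,\lambda_s F_X(\dx)\ds + M^{(1)}_t,\\
    \int_0^t\!\!\intr[\,f(\lambda_{s^-}\ee^{-y},N_{s^-},s)-f(\lambda_{s^-},N_{s^-},s)\,]\,\mu'(\dy,\ds)
    &= \int_0^t\!\!\intr[\,\cdots\,]\,\rho F_Y(\dy)\ds + M^{(2)}_t,
\end{align*}
where $M^{(1)},M^{(2)}$ are the compensated-integral local martingales.

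The key step is to argue that conditions \cref{eq:ic1,eq:ic2} upgrade $M^{(1)}$ and $M^{(2)}$ from local martingales to square-integrable martingales. This follows from the general result that for a compensated integer-valued random measure integral $\int H\,(\mu-\nu)$, the $L^2$-norm equals $\e\!\int H^2\,\nu$, which is exactly the finiteness stated in \cref{eq:ic1,eq:ic2}. Once $\e[M^{(1)}_t]=\e[M^{(2)}_t]=0$, taking expectation of the Itô expansion immediately gives \cref{eq:ic3} (with the initial term correctly identified as $f(\lambda_0,N_0,0)$, modulo the apparent argument-ordering typo in the statement), and the drift plus the two compensator integrals reassemble into $\e[\int_0^t \Lsc f(\lambda_s,N_s,s)\,\ds]$ with $\Lsc$ as in \cref{eq:operator-sc-definition}.

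For the partial integro-differential equation \cref{eq:pide-gsrm}, the point is that requiring $f(\lambda_t,N_t,t)$ to be a martingale forces $\e[\int_0^t \Lsc f(\lambda_s,N_s,s)\ds]=0$ for every $t\ge 0$. Differentiating in $t$ and using continuity of $s\mapsto\e[\Lsc f(\lambda_s,N_s,s)]$ together with the fact that $(\lambda_0,N_0,0)$ can be varied (or that the identity must hold for all admissible starting points by the strong Markov property of $(\lambda_t,N_t)$) yields $\Lsc f(\lambda,n,t)=0$ pointwise on the state space. The main technical obstacle is the bookkeeping in the Itô expansion, i.e.\ correctly treating the two disjoint jump sources via their own jump measures and compensators, and a second minor obstacle is invoking the $L^2$-isometry for compensated random-measure integrals to promote the martingale properties; everything else is direct assembly.
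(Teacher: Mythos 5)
Your argument for the Dynkin-type identity \cref{eq:ic3} matches the paper's proof step for step: It\^o's formula to split $f(\lambda_t,N_t,t)$ into drift, self-arrival jumps, and external-arrival jumps; compensated random-measure integrals ($Q_t(f),Q'_t(f)$ in the paper, $M^{(1)}_t,M^{(2)}_t$ in yours); the $L^2$-isometry upgrading local martingales to genuine square-integrable martingales under \cref{eq:ic1,eq:ic2}; take expectations. You also correctly flag that $f(0,N_0,\lambda_0)$ in the statement should read $f(\lambda_0,N_0,0)$.

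For the PIDE \cref{eq:pide-gsrm}, your route differs slightly from the paper's and is worth contrasting. The paper introduces $g_t:=\e[h(\lambda_T,N_T,T)\mid\f_t]$ (a martingale by construction), notes that $g-Q(g)-Q'(g)=\int_0^\cdot \Lsc g\,\ds$ is simultaneously a square-integrable martingale and a continuous process of finite variation, and invokes the standard fact that such a process is constant (Corollary I-3.16 of Jacod--Shiryaev), giving $\Lsc g=0$ $\p$-a.s.\ along the path. You instead take expectations, differentiate $\e[\int_0^t\Lsc f\,\ds]=0$ in $t$ to get $\e[\Lsc f(\lambda_t,N_t,t)]=0$, and then argue pointwise vanishing by varying the initial condition (or by the strong Markov property of $(\lambda_t,N_t)$). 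Both are legitimate. The paper's closing step is tighter in the sense that it extracts a pathwise (a.s.) conclusion from a single process identity without needing to range over initial states; your version works, but to make it airtight you should be explicit that you are sending $t\downarrow 0$ with arbitrary initial data $(\lambda,n)$ and relying on (right-)continuity of $s\mapsto\Lsc f(\lambda_s,N_s,s)$ at $s=0$, which in turn presupposes enough regularity of $f$ that is implicit but not stated. Also be aware that the proposition's wording ``Moreover, $f$ satisfies\ldots'' is a bit loose: the PIDE holds only for functions that additionally make $f(\lambda_t,N_t,t)$ a martingale (the paper's $g$), not for arbitrary $f$ in the domain of $\Lsc$; you implicitly assume this, which is the correct reading.
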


\begin{proof}
	First note that $\lambda_t$ can be recast as
	\begin{align}
		\lambda_{t} = \lambda_0 + \int_0^t \beta\lambda_{s} \ds + \int_0^t\intr\lambda_s(\ee^{- x}-1)\mu(\dx, \ds)+ \int_0^t\intr\lambda_s(\ee^{- y}-1)\mu'(\dy, \ds).
	\end{align}
	Invoking \ito's formula~\citep[confer][Chapter VI]{medvegyev:2009} on the arbitrary function $f(\lambda_t,N_t,t)$ yields
	\begin{align*}
		f(\lambda_t,N_t,t) & = f(\lambda_0,N_0,0) + \int_0^t \pdv{f}{s} \ds + \int_0^t \pdv{f}{\lambda} \dd{\lambda_s^{c}} + \sum_{0<s\leq t} [f(\lambda_{s^+},N_s,s) - f(\lambda_s,N_{s^-},s)]
	\end{align*}
	where $\lambda^c$ denotes the continuous part of the semimartingale. We can write
	\begin{align*}
		&\sum_{0<s\leq t}[f(\lambda_{s^+},N_s,s) - f(\lambda_s,N_{s^-},s)]
		 \\
		 & = \sum_{0<s\leq t}[f(\lambda_{s^+},N_s,s) - f(\lambda_s,N_{s^-},s)]\cdot\Delta N_s          \\
		 & \qquad+ \sum_{0<s\leq t}[f(\lambda_{s^+},N_s,s) - f(\lambda_s,N_{s^-},s)]\cdot\Delta N'_s         \\
		 & = Q_t(f) + \int_0^t\intr [f(\lambda_{s^+},N_s,s) - f(\lambda_s,N_{s^-},s)]\,\nu(\dx, \ds)   \\
		 & \qquad+ Q'_t(f) + \int_0^t\intr [f(\lambda_{s^+},N_s,s) - f(\lambda_s,N_{s^-},s)]\,\nu'(\dx, \ds)
	\end{align*}
	with $Q_{\cdot}(f)$ and $Q'_{\cdot}(f)$ being defined by $Q_{\cdot}(f) \coloneqq \int_0^{\cdot} \intr [f(\lambda_{s^+},N_s,s) - f(\lambda_s,N_{s^-},s)](\mu-\nu)(\dx, \ds)$ and $Q'_{\cdot}(f) \coloneqq \int_0^{\cdot} \intr [f(\lambda_{s^+},N_s,s) - f(\lambda_s,N_{s^-},s)](\mu'-\nu')(\dx, \ds)$, respectively. The integrability conditions in equations \cref{eq:ic1,eq:ic2} guarantee that $Q(f)$ and $Q'(f)$ are square integrable martingales \citep[Theorem VIII of Chapter II]{Bremaud1981}. Hence we conclude that the process $(f(\lambda_t,N_t,t))_{t\in\reals_+}$ is a special semimartingale \citep{Protter2005} which can be decomposed into a martingale and a predictable finite variation process
	\begin{align}
		\label{eq:tmp2}
		f(\lambda_t,N_t,t)-f(\lambda_0,N_0,0) = Q_t(f) + Q'_t(f)+\int_0^t \Lsc f(\lambda_s,N_s,s) \,{\ds}.
	\end{align}
	Since $Q(f)$ and $Q'(f)$ are square-integrable martingales, the process defined by
	\begin{align}
		\label{eq:dynkin}
		\left(f(\lambda_t,N_t,t)-f(\lambda_0,N_0,0)-\int_0^t \Lsc f(\lambda_s,N_s,s) \,{\ds} \right)_{t\in\reals_+}
	\end{align}
	is also a square-integrable martingale. Taking the expected value of both sides on equation \cref{eq:tmp2} yields the result.
	For the subsequent expression, first define $T > t$ and $g_t \coloneqq \e[h(\lambda_T,N_T,T)\,|\,t,\lambda_t=\lambda,N_t=n]$ for some function $h$ such that $g$ satisfies the $g$-integrability conditions \cref{eq:ic1,eq:ic2}. Then by construction, $g_t$ is a martingale. By similar arguments, we see that $g-Q(g)-Q'({g})$ is a square-integrable martingale, but $g - Q(g)-Q'(g) = \int_0^\cdot \Lsc g \ds$ is also a continuous process with finite variation. It must therefore be a continuous martingale with finite variation \citep[Corollary I-3.16]{JacodShiryaev}, hence, we must have $\Lsc g=0$ $\p-$almost surely which yields the partial integro-differential equation in \cref{eq:pide-gsrm}.
\end{proof}

\subsection{Reciprocal moments}%
\label{ssn:reciprocal}

Consistent with the observation in~\citet{vere1984moments}, we are unable to find moments of $\lambda_t$ but we can find moments of its reciprocal. Throughout \cref{ssn:reciprocal,ssn:covariance} we assume that $\lambda_0 = 1$, though the same arguments can be made in the general $\lambda_0$ case.

\begin{lemma}
	\label{eq:E1/L}
	Let $m_1^S \coloneqq \int (\ee^{x}-1) \, F_X(\dx)$ and $m_1^E \coloneqq \int (\ee^{y}-1) \, F_Y(\dy)$. We assume that $\beta>\rho m_1^E$, and $\lambda_0 = 1$. Then the expectation of $\lambda^{-1}_t$ is given by
	\begin{align}
		\label{inv1}
		\e[{\lambda^{-1}_t}] = \ee^{-\psi_1 t} + \frac{m_1^S}{\psi_1}(1-\ee^{-\psi_1 t})
	\end{align}
	where $\psi_1 \coloneqq \beta-\rho m_1^E$.
\end{lemma}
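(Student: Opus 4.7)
The plan is to apply the Dynkin-type formula \cref{eq:ic3} to the test function $f(\lambda,n,t) = \lambda^{-1}$, which will reduce the statement to a first-order linear ODE for $u(t) := \e[\lambda_t^{-1}]$.

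First, I would compute $\Lsc f$ term-by-term from \cref{eq:operator-sc-definition}. The time derivative vanishes; the drift piece contributes $\beta \lambda \cdot (-\lambda^{-2}) = -\beta \lambda^{-1}$; the self-arrival integral produces
\[
	\lambda \intr \bigl(\lambda^{-1}\ee^{x} - \lambda^{-1}\bigr) F_X(\dx) = \intr (\ee^{x}-1)\, F_X(\dx) = m_1^S,
\]
where the $\lambda$ and $\lambda^{-1}$ cancel to give a quantity independent of $\lambda$; and the external-arrival integral yields $\rho \lambda^{-1} m_1^E$. Combining,
\[
	\Lsc f(\lambda,n,t) = -(\beta - \rho m_1^E)\lambda^{-1} + m_1^S = -\psi_1 \lambda^{-1} + m_1^S.
\]

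Substituting into \cref{eq:ic3} with $\lambda_0 = 1$ gives $u(t) = 1 + \int_0^t \bigl(m_1^S - \psi_1 u(s)\bigr)\ds$, equivalently the linear ODE $u'(t) = m_1^S - \psi_1 u(t)$ with $u(0) = 1$. Solving via the integrating factor $\ee^{\psi_1 t}$ (which is well-defined since $\psi_1>0$ by hypothesis) and matching the initial condition yields exactly the claimed expression \cref{inv1}.

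The main obstacle is verifying the integrability hypotheses \cref{eq:ic1,eq:ic2} for the unbounded test function $f(\lambda) = \lambda^{-1}$. A direct evaluation of the squared jump increments shows \cref{eq:ic1} reduces to a constant multiple of $\e\bigl[\int_0^t \lambda_s^{-1}\ds\bigr]$ (finite a posteriori from the derived formula) while \cref{eq:ic2} reduces to a multiple of $\e\bigl[\int_0^t \lambda_s^{-2}\ds\bigr]$, involving the second inverse moment, which is not addressed by the lemma itself. The standard remedy is a localization argument: apply the generator identity on the stopping times $\tau_n := \inf\{t \ge 0 : \lambda_t < 1/n\}$ so that $f$ and all compensators are bounded on $[0, t\wedge\tau_n]$, derive the ODE on the stopped process, and pass $n\to\infty$ by monotone/dominated convergence after verifying that the candidate solution is uniformly bounded on compacts; the hypothesis $\beta > \rho m_1^E$ is exactly what ensures this bound.
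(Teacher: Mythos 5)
Your proof is correct and follows essentially the same route as the paper: compute $\Lsc(\lambda^{-1}) = -\psi_1\lambda^{-1} + m_1^S$ from the generator, take expectations via \cref{eq:ic3}, differentiate to get the linear ODE $\theta_1' + \psi_1\theta_1 = m_1^S$ with $\theta_1(0)=1$, and solve. Where you go further is in explicitly flagging that the integrability hypotheses \cref{eq:ic1,eq:ic2} reduce (for $f=\lambda^{-1}$) to finiteness of $\e\bigl[\int_0^t\lambda_s^{-1}\ds\bigr]$ and $\e\bigl[\int_0^t\lambda_s^{-2}\ds\bigr]$, something the paper's proof passes over in silence, and your localization via $\tau_n = \inf\{t:\lambda_t < 1/n\}$ is the right remedy (since $\lambda_s^{-1}$ is already bounded below by $\ee^{-\beta s}$, only the lower bound on $\lambda_s$ needs localizing). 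One small imprecision: boundedness of the candidate solution on compacts holds for any sign of $\psi_1$; the hypothesis $\beta>\rho m_1^E$ (i.e.\ $\psi_1>0$) is used so that the stationary limit $m_1^S/\psi_1$ exists and is positive, not to secure the finite-time bound you invoke when passing $n\to\infty$.
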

\begin{proof}
	From \cref{prop:generator}, we have for $f\in\Omega(\Lsc)$ that
	\[ f(\lambda_t,N_t,t) - f(\lambda_0,N_0,0) - \int \Lsc f(\lambda_s,N_s,s) \ds \]
	is an $\f-$martingale. Setting $f=\lambda^{-1}$ in the generator yields
	\begin{align}
		\Lsc (\lambda^{-1}) = -{\beta}{\lambda^{-1}} + m_1^S + {\rho}{\lambda^{-1}} m_1^E,
	\end{align}
	and $\e\bigl[\lambda^{-1}_t-\lambda^{-1}_0-\int_0^t \Lsc (\lambda^{-1}_s) \ds \bigr]=0$. Differentiating $\theta_1(t) \coloneqq \e[{\lambda^{-1}_t}]$ with respect to $t$ yields the non-linear inhomogeneous ODE
	\begin{align}
		\theta'_1 + \psi_1 \theta_1=m_1^S,\quad\theta_1(0)=1
	\end{align}
	whose solution is given in equation \cref{inv1}.
\end{proof}

By a similar token, and setting $f=\lambda^{-2}$, we state the following:
\begin{lemma}
	\label{eq:E1/L2}
	Let $m_2^S \coloneqq \int (\ee^{2x}-1) \, F_X(\dx)$, $m_2^E \coloneqq \int (\ee^{2y}-1) \, F_Y(\dy)$. We assume that $\beta>\rho m_1^E$, $2 \beta > \rho m_2^E$, and $\lambda_0 = 1$. Then the expectation of $\lambda^{-2}_t$ is given by
	\begin{align}
		\label{inv2}
		\e\bigl[{\lambda^{-2}_t}\bigr]
		= \ee^{-\psi_2 t} + m^S_2 \Bigl\{
		\frac{\ee^{-\psi_1 t} - \ee^{-\psi_2 t}}{\psi_2 - \psi_1} +
		\frac{m_1^S}{\psi_1} \Bigl[\Bigl(\frac{1}{\psi_2} - \frac{\ee^{-\psi _1 t}}{\psi_2 - \psi_1}\Bigr) -
			\ee^{-\psi_2 t} \Bigl(\frac{1}{\psi_2} - \frac{1}{\psi_2 - \psi_1}\Bigr)\Bigr]\Bigr\}
	\end{align}
	where $\psi_2 \coloneqq 2\beta - \rho m_2^E$.
\end{lemma}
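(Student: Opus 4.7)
The plan is to apply \cref{prop:generator} with the test function $f(\lambda,n,t) = \lambda^{-2}$, which mirrors the proof of \cref{eq:E1/L} except that the generator produces a source term involving $\lambda^{-1}$, forcing us to reuse the previous lemma.

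First I would compute $\Lsc(\lambda^{-2})$ by going through each piece of \cref{eq:operator-sc-definition}: $\partial_t f = 0$, $\beta\lambda\,\partial_\lambda f = -2\beta\lambda^{-2}$, the $F_X$-integral collapses to $\lambda^{-1} m_2^S$ using $(\lambda \ee^{-x})^{-2} - \lambda^{-2} = \lambda^{-2}(\ee^{2x}-1)$ and the fact that the $(n+1,n)$ jump in the $N$-coordinate is irrelevant for an $n$-free test function, and the $F_Y$-integral similarly produces $\rho m_2^E\,\lambda^{-2}$. Combining, $\Lsc(\lambda^{-2}) = -\psi_2\,\lambda^{-2} + m_2^S\,\lambda^{-1}$. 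The integrability conditions \cref{eq:ic1,eq:ic2} are handled under the assumptions $\beta>\rho m_1^E$ and $2\beta>\rho m_2^E$, exactly as in the previous lemma; I would note this briefly rather than redo the bookkeeping.

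Setting $\theta_2(t) \coloneqq \e[\lambda_t^{-2}]$ and differentiating \cref{eq:ic3} in $t$ gives the linear first-order inhomogeneous ODE
\begin{equation*}
\theta_2'(t) + \psi_2\,\theta_2(t) = m_2^S\,\theta_1(t), \qquad \theta_2(0)=1,
\end{equation*}
where $\theta_1(t)$ is the expression in \cref{inv1}. I would solve this by the integrating factor $\ee^{\psi_2 t}$, yielding
\begin{equation*}
\theta_2(t) = \ee^{-\psi_2 t} + m_2^S\,\ee^{-\psi_2 t}\int_0^t \ee^{\psi_2 s}\theta_1(s)\,\ds .
\end{equation*}
Writing $\theta_1(s) = \tfrac{m_1^S}{\psi_1} + \bigl(1-\tfrac{m_1^S}{\psi_1}\bigr)\ee^{-\psi_1 s}$, the integral splits into $\int_0^t \ee^{\psi_2 s}\,\ds$ and $\int_0^t \ee^{(\psi_2-\psi_1)s}\,\ds$, which are evaluated by elementary means.

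The remaining step is pure algebra: substituting the two evaluated integrals, multiplying through by $\ee^{-\psi_2 t}$, and regrouping the terms to match the exact form in \cref{inv2}. The only mild obstacle is a bit of careful bookkeeping to cluster the $1/\psi_2$ versus $1/(\psi_2-\psi_1)$ contributions into the bracketed shape stated in the lemma; there is no conceptual difficulty, since everything flows from the reduction to an ODE already solved in \cref{eq:E1/L}. The result is sharp provided $\psi_1 \neq \psi_2$, and the equality of the two sides can be verified immediately by checking the initial condition $\theta_2(0)=1$ and differentiating to recover the ODE.
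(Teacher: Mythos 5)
Your proposal is correct and follows exactly the approach the paper alludes to (the paper omits the details, remarking only ``by a similar token, and setting $f=\lambda^{-2}$''): compute $\Lsc(\lambda^{-2}) = m_2^S\lambda^{-1} - \psi_2\lambda^{-2}$, obtain the ODE $\theta_2' + \psi_2\theta_2 = m_2^S\theta_1$ with $\theta_2(0)=1$, and solve via an integrating factor using \cref{inv1} for $\theta_1$. The algebra checks out, and you correctly flag the implicit hypothesis $\psi_1 \neq \psi_2$ needed for the closed form.
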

We end this section by giving some recursive relationships related to the inverse moments of our process. Let $\mathbb{N}$ be the set of natural numbers and let $k\in\mathbb{N}$, $m_k^S \coloneqq \int (\ee^{kx}-1) \, F_X(\dx)$, $m_k^E \coloneqq \int (\ee^{ky}-1) \, F_Y(\dy)$ and $\psi_k:=k\beta-\rho m^E_k$. We further assume that $k\beta>\rho m^E_k$.
Then the generator for the function $f=\lambda^{-k}$ is readily computed as follows
\begin{align*}
	\Lsc(\lambda^{-k})=m_k^S\lambda^{k-1} - \psi_k\lambda^{k}.
\end{align*}
By the martingale property we have that
\begin{align}
	\label{eq:tmpy}
	\e\Bigl[\lambda^{-k}_t-\lambda^{-k}_0-\int_0^t \Lsc (\lambda^{-k}_s) \ds \Bigr]=0.
\end{align}
Define the quantity $\theta_k(t) \coloneqq \e[{\lambda^{-k}_t}]$ and differentiating equation~\cref{eq:tmpy} with respect to $t$, we arrive at the recursive non-linear inhomogeneous ODE
\begin{align*}
	\theta^{\prime}_k(t)+\psi_k \theta_k(t) = m^S_k\theta_{k-1}(t)
\end{align*}
where $\theta_0\equiv 1$, which can be solved in a recursive fashion to obtain further reciprocal moments.

\subsection{Covariance process}
\label{ssn:covariance}

We provide an expression for the mean of the product of two reciprocals of intensities for our process. This can be used to compute the covariance process. For $s<t$, it holds true that
\begin{align}
	\e[\lambda_t^{-1}\lambda_s^{-1} ] & =\e[\e[\lambda_t^{-1}\lambda_s^{-1}\,|\,\lambda_s^{-1}] ]\nonumber                                                                     \\
	                                                        & =\ee^{-\psi_1 (t-s)}\e[\lambda_s^{-2}]+\frac{m^S_1}{\psi_1}(1-\ee^{-\psi_1(t-s)})\e[\lambda^{-1}_s].\label{eq:cov}
\end{align}
To see why this is true, note that the inner expectation can be computed as follows:
\begin{align*}
	\e[\lambda_t^{-1}\lambda_s^{-1}|\lambda_s^{-1}] & =\lambda_s^{-1}\e[\lambda_t^{-1}|\lambda_s^{-1}]=\lambda_s^{-2}\ee^{-\psi_1 (t-s)}+\lambda^{-1}_s\frac{m^S_1}{\psi_1}(1-\ee^{-\psi_1(t-s)}).
\end{align*}
Therefore, for $s < t$ we have
\begin{align*}
	\mathbb{C}\mathrm{ov}(\lambda_s^{-1}, \lambda_t^{-1})
	&= \e[\lambda_t^{-1}\lambda_s^{-1}  ] - \e[\lambda_t^{-1}  ] \e[\lambda_s^{-1}   ] \\
	&= \ee^{-\psi_1 (t-s)}\e[\lambda_s^{-2}]+\frac{m^S_1}{\psi_1}(1-\ee^{-\psi_1(t-s)})\e[\lambda^{-1}_s] - \e[\lambda_t^{-1}  ] \e[\lambda_s^{-1}  ]
\end{align*}
where the remaining expectations are given by \cref{inv1,inv2}.
Furthermore, when we set $S^{\prime}=0$ in equation~\cref{eq:conditional intensity function original}, we retrieve the covariance results in \cite[Theorem 2, p.\ 317]{borovkov_vere-jones_2000} upon substituting the results of~\cref{eq:E1/L,eq:E1/L2} to get an expression for equation~\cref{eq:cov} and subsequently subtracting the quantity $\e[\lambda^{-1}_t]\cdot\e[\lambda^{-1}_s]$.

\section{Numerical results}%
\label{scn:numerical}

To confirm that the simulation algorithms of~\cref{scn:simulation} agree with the reciprocal moments derived in~\cref{scn:generator}, we compare the first reciprocal moments given by the theory to Monte Carlo estimates. The results are given in~\cref{fig:inverse-moments}. The theory and the simulated values agree nicely. This example is also used to illustrate the speed benefits to the composition simulation method over the thinning method. In particular, \cref{tbl:runtimes} shows how long the two algorithms took to generate a fixed number of realizations of the extrinsic stress-release process. The difference in the performance can be explained by: (i) the fact that the composition method is easily vectorized while thinning is not, and (ii) the thinning algorithm is inherently inefficient as it intentionally generates too many points and discards a possibly large fraction of them.

\newpage

\begin{figure}[h!]
	\centering
	\begin{tikzpicture}
		\draw (0, 0) node {\input{inverse-moments.pgf}};
		\draw (-5.4, 3) node {\large $\e[{\lambda^{-1}_t}]$};
		\draw (-1.0, -1.5) node {\large $t$};
		\draw (2.2, 3) node {\large $\e[{\lambda^{-2}_t}]$};
		\draw (6.5, -1.5) node {\large $t$};
	\end{tikzpicture}
	\vspace{0.2em}
	\caption{First reciprocal moments $\e[{\lambda^{-1}_t}]$ and second reciprocal moments $\e[{\lambda^{-2}_t}]$ for $t \in [0, 50]$ of an extrinsic stress-release process with $\lambda_0 = 1$, $\beta = 0.25$, $\rho = 1.25$, and $X_i \sim \mathsf{Exp}(3)$ and $Y_j \sim \mathsf{Exp}(10)$. The theoretical values, given by \cref{inv1,inv2}, are compared with crude Monte Carlo estimates using the two simulation methods from \cref{scn:simulation}. Both simulation methods were allocated the same amount of computation time (55--60 seconds); as such, \cref{alg:composition} generated 375,000 sample paths of $\lambda_t$, whereas \cref{alg:thinning} generated 25,000. The shaded regions indicate the 95\% confidence intervals of the Monte Carlo estimates.}%
	\label{fig:inverse-moments}
\end{figure}

\begin{table}[!h]
	\centering
	\begin{tabular}{lcccc}
		\multirow{2}{*}{\textbf{Algorithm}} & \multicolumn{4}{c}{\textbf{Number of realizations}}                            \\
		                                    & $10^2$                                              & $10^3$ & $10^4$ & $10^5$ \\ \hline
		Composition                         & 0.0205                                              & 0.0544 & 0.3600 & 3.2992 \\
		Thinning                            & 0.3026                                              & 3.0329 & 30.761 & 310.32 \\
	\end{tabular}

	\vspace{0.5em}
	\caption{Comparison of runtime (in seconds) to simulate a number of realizations of the extrinsic stress-release process until time $T = 100$ using \cref{alg:composition,alg:thinning}. The fastest time of 3 attempts is recorded. A grid search is performed to select the optimal step size $\Delta = 1.86$ for \cref{alg:thinning} (this search is not included in the runtimes). The specification of the process ($\lambda_0$, $\beta$, $\rho$ etc.) is the same as in \cref{fig:inverse-moments}.}%
	\label{tbl:runtimes}
\end{table}

The Python code used to generate these numerical results is available on Github:\\
{\footnotesize \url{https://github.com/Pat-Laub/exact-simulation-of-extrinsic-stress-release-processes}.}

\section{Concluding remarks}

In this article we have introduced a straightforward but computationally efficient way of simulating exactly for a class of generalized stress-release process. The idea stems from the observation that between contiguous jumps at points in time, the process satisfies the continuous part of the semimartingale and thus is governed by an ordinary differential equation. This permits us to derive an expression for the distribution between events. The end result is that we are able to sidestep the need to resort to thinning algorithms for the simulation of this class of point processes.

The explicit form of the infinitesimal generator for the extrinsic stress-release process is given. Theoretical reciprocal moments are derived which are then used to establish the validity of our simulation algorithms.

We envisage that the approach outlined in this paper extends naturally to the general coupled stress release models, i.e.\ the linked stress-release model \citep{BebbingtonHarte:2001,BebbingtonHarte:2003} with appropriate structures satisfying the martingale generators. Ongoing work is investigating such a problem.





\bibliographystyle{apacite}

\end{document}